\documentclass[11pt,a4paper]{article}

\usepackage{amsmath,amssymb,amsthm}
\usepackage{graphicx}
\usepackage{caption}
\usepackage{cite}
\usepackage{hyperref}
\usepackage{url}
\usepackage{geometry}
\newtheorem{theorem}{Theorem}[section]
\newtheorem{lemma}[theorem]{Lemma}

\newtheorem{remark}[theorem]{Remark}

\usepackage{todonotes}

\title{Photon surfaces extension in general spherical dust collapse}

\author{Roberto Giamb\`o$^{1,2}$\thanks{E-mail: \texttt{roberto.giambo@unicam.it}}\,\, and Camilla Lucamarini$^{1}$\\
\small $^1$ School of Science and Technology, Mathematics Division, Universit\`a di Camerino,\\
\small Via Madonna delle Carceri 8, Camerino, 62032, MC, Italy.\\
\small $^2$ Osservatorio Astronomico di Brera, INAF, Via Brera 28, Milano, 20121, Italy.}

\date{}

\begin{document}

\maketitle

\begin{abstract}
We extend the analysis of photon surfaces in spherical dust collapse to the general,
non-marginally bound case, i.e.\ allowing the \textit{energy function} $k(x)$ of the
Lema\^{\i}tre--Tolman--Bondi (LTB) model to be non-zero. Starting from the
dynamical-systems formulation developed in our previous work for the marginally
bound case~\cite{GiamboLucamarini2024}, we derive the photon surface equations
for the LTB metric with $k(x)\neq 0$, and we recast the geometric condition for a
timelike hypersurface to be a photon surface as a non-autonomous first-order
dynamical system. Even though the LTB evolution equation does not integrate in
closed form when $k(x)\neq 0$, the implicit solutions available in the
literature, together with comparison theorems for ordinary differential equations,
are sufficient to show that the only physically acceptable extension of the
exterior photon sphere $r=3M$ into the collapsing dust cloud is a null
hypersurface, generated by outgoing radial null geodesics. Combining this fact
with the known results on the causal structure of the general LTB model, we then
establish that the photon surface reaches the central singularity if and only if
the singularity is naked, thereby extending to the general case the picture
already known in the marginally bound regime. The structural dichotomy between the
naked and covered end states is also discussed in connection with possible
observational signatures in the early-time evolution of black-hole shadows.
\end{abstract}

\noindent\textbf{Keywords:} photon surface, dust collapse, Lema\^{\i}tre--Tolman--Bondi
model, black hole, naked singularity, non-marginally bound case, gravitational
collapse, shadow.

\section{Introduction}
\label{sec:intro}

The notion of photon sphere classically arises in static and spherically symmetric
black-hole spacetimes, such as the Schwarzschild solution, as a particular
timelike hypersurface on which null geodesics initially tangent to the surface
remain tangent throughout their evolution. In the Schwarzschild geometry the
photon sphere coincides with the timelike cylinder $r=3M$, and it plays a
fundamental role in the analysis of the shadow of a black hole as observed by a
distant observer~\cite{EHT2019,EHT2024,VirbhadraEllis2002}.

A natural geometric generalisation of this notion to spacetimes lacking specific
symmetries or Killing structures is the concept of \emph{photon surface},
introduced by Claudel, Virbhadra and Ellis~\cite{Claudel2001}. By their
definition, a photon surface is a timelike, embedded hypersurface such that every
null geodesic initially tangent to it remains tangent. The construction is
intrinsic, makes no reference to the existence of Killing vectors, and is
therefore well suited to fully dynamical settings such as gravitational collapse,
where no global timelike Killing field is available~\cite{Cao2021}.

The detection of black holes through their shadows by the Event Horizon
Telescope~\cite{EHT2019,EHT2024} has further motivated the systematic study of
photon surfaces in dynamical spacetimes, since the photon surface is the natural
geometric object underlying the formation of the shadow in non-stationary
backgrounds. One physically rich setting in which photon surfaces can be
investigated dynamically is gravitational collapse, which provides at the same
time a concrete model of black-hole formation and an arena in which the cosmic
censorship conjecture~\cite{Penrose1965} can be tested.

In a recent companion paper~\cite{GiamboLucamarini2024} (hereafter referred to as
\textsc{Paper~I}), we carried out a systematic study of photon surfaces in the
Lema\^{\i}tre--Tolman--Bondi (LTB) dust collapse model in the \emph{marginally
bound} case, i.e.\ under the assumption that the \textit{energy function} $k(x)$ of the
collapsing shells vanishes identically. 
The main results of that work can be
summarised as follows:
\begin{itemize}
  \item[(i)] the condition that a timelike $SO(3)$-invariant hypersurface in a
  general spherically symmetric spacetime be a photon surface can be reformulated
  as a non-autonomous first-order dynamical system, with a single auxiliary
  variable $y(t)$ encoding the tilt of the hypersurface relative to the comoving
  observer;
  \item[(ii)] the only physically acceptable extension of the exterior photon
  sphere $r=3M$ into the interior of the collapsing dust cloud is a null
  hypersurface, generated by outgoing radial null geodesics;
  \item[(iii)] the photon surface reaches the central singularity if and only if
  the singularity is naked; in the covered (black-hole) case it terminates at
  the regular centre, while in the naked case there exist outgoing radial null
  geodesics emanating from the central singularity that escape the photon
  surface.
\end{itemize}

The assumption $k\equiv 0$ was adopted in \textsc{Paper~I} mainly because it
allows the LTB evolution equation to be integrated in closed elementary form,
yielding the explicit expression $r(t,x)=x\bigl(1-\mu(x)t\bigr)^{2/3}$ for the
areal radius. This explicit solution greatly simplifies the quantitative
analysis, both for the dynamical system governing the photon surface and for the
comparison arguments needed to study its behaviour near the central singularity.
From a physical and mathematical standpoint, however, it is natural to ask
whether the qualitative picture established in \textsc{Paper~I} survives when
the marginally bound assumption is dropped. This is a non-trivial question: in
the non-marginally bound case the LTB equation does not admit a closed-form
elementary solution, but only either a parametric (cycloidal/hyperbolic) or an implicit one, and the
photon surface ODE acquires additional terms depending on $k(x)$ and on its
derivative $k'(x)$. A priori, one cannot exclude that these terms qualitatively
modify the structure of the photon surface, in particular its behaviour near the
central singularity.

The aim of the present paper is to show that this is not the case: the
qualitative picture of \textsc{Paper~I} is fully robust under the relaxation
of the marginally bound assumption. More precisely, we show that the
dynamical-systems framework developed in \textsc{Paper~I} naturally adapts to
the general LTB model, and that the unique physically acceptable extension of the
exterior photon sphere remains the outgoing null hypersurface, regardless of the
non-zero initial binding energy of the collapsing shells. We then use the
 solution of the LTB equation in the implicit and the comparison results for radial
null geodesics in the general LTB model~\cite{Giambo2003,Mena2001,Singh1996}
to establish that the photon surface reaches the central singularity if and only
if the singularity is naked, exactly as in the marginally bound case.

The paper is organised as follows. Section~\ref{sec:dustgeneral} reviews the
general LTB model with $k(x)\neq 0$ and collects the facts about the parametric
solution and the causal structure of the central singularity that will be needed
in the sequel. Section~\ref{sec:photonsurfaces} specialises the general photon
surface equations of \textsc{Paper~I} to this setting, states the key Lemma  \ref{lem:main},
ruling out the spacelike initial data, and proves the null-extension theorem.
Section~\ref{sec:analysis} analyses the resulting null geodesic ODE separately
in the covered and naked singularity cases, by means of sub- and supersolution
comparison arguments. Section~\ref{sec:conclusions} summarises the results and
discusses their physical implications, in particular concerning the structural
differences between the early-time shadows of black holes and naked
singularities formed in collapse.

\section{Dust collapsing spacetimes: the general case}
\label{sec:dustgeneral}

\subsection{The LTB metric and its evolution equation}

We consider a spherically symmetric spacetime whose interior is filled by a
pressureless perfect fluid (dust), with energy--momentum tensor
$T = -\rho\, dt\otimes \partial_t$, where $\rho=\rho(t,x)$ is the rest-mass
density~\cite{Joshi2012,Singh1996}. The Einstein field equations with this
matter content reduce, in synchronous comoving coordinates $(t,x,\theta,\varphi)$,
to the Lema\^{\i}tre--Tolman--Bondi (LTB) metric
\begin{equation}
  g = -\,dt^2 + \frac{r_x^2}{1+k(x)}\,dx^2 + r^2\,d\Omega^2,
  \label{eq:LTBmetric}
\end{equation}
where $r=r(t,x)$ is the areal radius, $r_x=\partial r/\partial x$, $k(x)>-1$ is
the so-called \emph{energy function}, and
$d\Omega^2 = d\theta^2 + \sin^2\!\theta\,d\varphi^2$ is the round metric on the
unit two-sphere. The evolution of the areal radius is governed by the single PDE
\begin{equation}
  r_t(t,x) = -\sqrt{k(x) + \frac{2m(x)}{r(t,x)}},
  \label{eq:LTBevol}
\end{equation}
where $m(x)$ is the Misner--Sharp mass function and the minus sign selects the
collapsing branch ($r_t<0$). The functions $k(x)$ and $m(x)$ are freely
prescribable initial data, subject to standard regularity requirements at the
centre and to the physical condition that the energy density $\rho$ be
non-negative.

The sign of $k(x)$ classifies the collapsing shells:
\begin{itemize}
  \item $k(x)=0$: \emph{marginally bound} shells, with zero total mechanical
  energy. This is the case treated in \textsc{Paper~I}~\cite{GiamboLucamarini2024}.
  \item $k(x)<0$: \emph{bound} shells, with negative total energy; the collapse
  is decelerated by the gravitational potential.
  \item $k(x)>0$: \emph{unbound} shells, with positive total energy; the shell
  is asymptotically free at infinity.
\end{itemize}

The interior LTB region is defined for $x\in[0,x_b]$, and it is matched to a
Schwarzschild exterior across the timelike hypersurface
$\Sigma=\{x=x_b\}$. The Darmois junction conditions~\cite{Darmois1927}
reduce, in this setting, to the requirement that the Misner--Sharp mass be
continuous across $\Sigma$, i.e.\ that the Schwarzschild mass coincide with the
boundary value $M=m(x_b)$; this condition is automatically propagated by
equation~\eqref{eq:LTBevol}. We refer to~\cite{GiamboLucamarini2024,Magli1997,
Israel1966,MarsSenovilla1993} for further details on the matching procedure.

\subsection{The marginally bound vs.\ the general case}

In the marginally bound case $k\equiv 0$, equation~\eqref{eq:LTBevol} integrates
in closed elementary form, yielding
\begin{equation}
  r(t,x) = x\bigl(1 - \mu(x)\,t\bigr)^{2/3}, \qquad
  \mu(x) := \tfrac{3}{2}\sqrt{\tfrac{2m(x)}{x^3}},
  \label{eq:explicitk0}
\end{equation}
where the initial condition is $r(0,x)=x$, and the singularity time of the shell
labelled by $x$ is $t_s(x)=1/\mu(x)$. In particular, the central singularity
forms at $t_s(0)=1/\mu(0)$, which we may normalise to unity without loss of
generality.

For $k(x)\neq 0$, the closed-form expression~\eqref{eq:explicitk0} is no longer
available; nevertheless, 
equation \eqref{eq:LTBevol} can be integrated to give $r(t,x)$ in  the implicit form (see \cite{MenaNolanTavakol2004} or \cite[Section 18.3]{PlebanskiKrasinski2006})
\begin{equation}\label{eq:solk}
t_s(r)-t=\sqrt{\frac{r^3}{2m(x)}}\,\Gamma\left(-\frac{k(x) r}{2m(x)}\right),
\end{equation}
where the (positive) function $\Gamma$ is defined as
\begin{equation}
    \Gamma(y)=\frac{\arcsin\sqrt{y}}{y^{3/2}}-\frac{\sqrt{1-y}}{y}.\label{eq:3.13}
\end{equation}
Here 
$\arcsin$ and $\sqrt y$
denote the principal branches of the complex inverse sine and square root, respectively\footnote{In other words, $\Gamma(y)=\frac{\operatorname{arcsinh}\sqrt{-y}}{(-y)^{3/2}} - \frac{\sqrt{1-y}}{y}$ for $y<0$.}, we restrict $\Gamma(y)$
to real arguments $y<1$ and extend it by continuity to $y=0$ setting $\Gamma(0)=\tfrac23$.

The unbound case, where $k(x) > 0$, corresponds to the case where $y$ is negative, while the bound case, where $k(x) < 0$, corresponds to the case where $y$ is positive. The common limit, $y = 0$, relates to the marginally bound case $k=0$.

In the equation \eqref{eq:solk}, the function $t_s(r)$ is an integration function that can be specified, as usual, assuming the initial condition $r=x$ at $t=0$:
\begin{equation}\label{eq:ts}
t_s(r)=\sqrt{\frac{x^3}{2m(x)}}\,\Gamma\left(-\frac{k(x) x}{2m(x)}\right).
\end{equation}
Since the righthand side of \eqref{eq:solk} tends to $0$ as $r\to 0^+$, the function $t_s(x)$
 expresses the comoving time when 
the shell labeled by $x$ collapses to $r=0$.

In both regimes (bound and unbound), the standard regularity conditions on the initial data
($m(x)/x^3$ smooth and positive, $k(x)/x^2$ smooth) ensure that $r_x(t,x)>0$
throughout the region where the cloud is regular. The non-central apparent
horizon curve $t_h(x)$, defined for $x>0$ by
$r(t_h(x),x)=2m(x)$, is everywhere smooth on $(0,x_b]$.

\subsection{Causal structure of the central singularity}
\label{subsec:causal}

The causal structure of the general LTB model has been studied in
depth~\cite{Singh1996,Deshingkar1999,Mena2001,Joshi2012}. We collect here the
results that will play a role in our analysis, as recasted in the framework of \cite{Giambo2003}, where a general class of spacetimes, including the present case as a particular situation, was analyzed. There it is considered a class of spherically symmetric spacetimes
\begin{equation}
  g = -e^{2\nu}\,dt^2 + e^{2\lambda}\,dx^2 + r^2\,d\Omega^2,
  \label{eq:generalmetric}
\end{equation}
with $\lambda,\nu,r$ functions of $(t,x)$,
where the function
\begin{equation}\label{eq:ru2}
r\, (r_t e^{-\nu})^2=2m+r\,\left[(r_x e^{-\lambda})^2-1\right],
\end{equation}
can be seen as a function of $(x,r)$; therefore, it includes the dependence on $t$ through $r(t,x)$. In this case, the final state of the central singularity depends on the Taylor expansion of \eqref{eq:ru2} in the variables $(x,r)$. In our case, it takes the form $2m(x)+r k(x)$. 
We assume the free functions $m(x)$ and $k(x)$ admit Taylor expansions of the form
\begin{align}
  m(x)&=x^3(m_0+m_p x^{p}+o(x^{p})), \qquad m_0,m_p\in\mathbb{R}\setminus\{0\},\ p\in\mathbb{N}\setminus\{0\},
         \label{eq:massprofile}\\[4pt]
  k(x) &= x^2(k_0+k_qx^q + o(x^q)), \qquad k_0,k_q\in\mathbb{R}\setminus\{0\},\ q\in\mathbb{N}\setminus\{0\}.
         \label{eq:kprofile}
\end{align}
These requirements imply \cite{Giambo:2002xc} that the initial data are regular at the centre,
and similarly to the marginally bound case, non central singularities are covered, whereas the central shell $x=0$ becomes trapped at the same comoving time it becomes singular, i.e. $t_s(0)= t_h(0)$. In view of this, the central singularity may possibly become naked,
yielding the following
classification \cite{Giambo:2002xc,Giambo2003}: setting
\begin{equation}\label{eq:parameters}
\begin{aligned}
n &=\min\{p,q\},\\
a &= 
-\int_0^1 \frac{(2m_3 + k_3\tau)\sqrt{\tau}}{2(2m_0 + k_0\tau)^{3/2}}\,\mathrm{d}\tau
=
\frac{1}{\sqrt{2m_0}}\left[\Gamma\!\left(-\frac{k_0}{2m_0}\right)\!\left(\frac{m_3}{m_0} - \frac{3}{2}\frac{k_3}{k_0}\right)
     - \frac{1}{\sqrt{1+\frac{k_0}{2m_0}}}\!\left(\frac{m_3}{m_0} - \frac{k_3}{k_0}\right)\right]
\end{aligned}
\end{equation}
we have that:
\begin{itemize}
  \item the central singularity is \emph{naked} if either $n\in\{1,2\}$, or
  $n=3$ with the coefficient $a$ in~\eqref{eq:massprofile} suitably large: $a>a_{\text{c}}:=\tfrac{26+15\sqrt3}{3}m_0.$
  \item It is \emph{covered} (i.e.\ a black-hole singularity) otherwise,
  i.e.\ for $n\ge 4$ or for $n=3$ with $a\le a_{\text{c}}$.
\end{itemize}
In other words, the causal nature of the central singularity in the general LTB
model is dictated by the same parameters $n$ and $a$ controlling the leading
behaviour of the mass profile in the marginally bound case, but together with the energy
function $k(x)$. We shall comment
on this case in Section~\ref{sec:conclusions}.

A crucial technical tool, which we shall repeatedly use in the sequel, is the
following comparison property~\cite{Giambo2003}: the apparent
horizon curve $t_h(x)$ is a \emph{subsolution} of the outgoing radial null
geodesic equation
\begin{equation}
  t'(x) = \frac{r_x(t(x),x)}{\sqrt{1+k(x)}}, \qquad x\in(0,x_b),
  \label{eq:nullgeodesic}
\end{equation}
in the sense that $t_h'(x)\le r_x(t_h(x),x)/\sqrt{1+k(x)}$. This property,
which holds for the general LTB metric exactly as in the marginally bound
case~\cite{GiamboLucamarini2024}, is the cornerstone of the analysis in
Section~\ref{sec:analysis}.

\section{Photon surfaces in dust collapse: the general case}
\label{sec:photonsurfaces}

\subsection{General dynamical-systems formulation}

We briefly recall the framework developed in \textsc{Paper~I}~\cite{GiamboLucamarini2024};
the reader is referred to that paper for full derivations and a complete
discussion. For a general spherically symmetric metric of the form \eqref{eq:generalmetric}, the condition that an
$SO(3)$-invariant timelike hypersurface
$\mathcal{S}=\{(t,x(t),\theta,\varphi)\}$ be a photon surface in the sense of
Claudel--Virbhadra--Ellis~\cite{Claudel2001} is equivalent (cf.~Proposition~2 of
\textsc{Paper~I}, and the quasi-local characterisation
of~\cite{Cao2021}) to the first-order non-autonomous dynamical system
\begin{align}
  \dot{x}(t) &= e^{\nu-\lambda}\,y(t), \label{eq:gendyn1} \\
  \dot{y}(t) &= \bigl(1-y(t)^2\bigr)\,e^{\nu-\lambda}\,\frac{r_x - r\,\nu_x}{r}
              + y(t)\,\frac{r_t - r\,\lambda_t}{r}, \label{eq:gendyn2}
\end{align}
where $y(t)\in[-1,1]$ is an auxiliary variable encoding the direction of the
tangent vector to the hypersurface relative to the comoving observer. The
boundary values $y=\pm 1$ are equilibrium points of~\eqref{eq:gendyn2}, and they
correspond to outgoing ($y=+1$) and ingoing ($y=-1$) radial null geodesics,
which are trivially photon surfaces (being null hypersurfaces). The intermediate
values $y\in(-1,1)$ correspond to genuinely timelike photon surfaces tilted with
respect to the comoving frame.

\subsection{Specialization to the LTB metric with $k(x)\ne 0$}

For the LTB metric~\eqref{eq:LTBmetric} we read off
$e^{\nu}=1$ and $e^{\lambda}=r_x/\sqrt{1+k(x)}$, hence
$e^{\nu-\lambda}=\sqrt{1+k(x)}/r_x$. 
The corresponding strictly equivalent
first-order system reads:
\begin{align}
  \dot{x}(t) &= \frac{\sqrt{1+k(x)}}{r_x}\,y(t), \label{eq:LTBdyn1} \\[4pt]
  \dot{y}(t) &= \bigl(1-y(t)^2\bigr)\,\left[\frac{\sqrt{1+k(x)}}{r}
              + y(t)\,\left(\frac{r_t}{r} -\frac{r_{tx}}{r_x}\right)\right].
  \label{eq:LTBdyn2}
\end{align}
For $k\equiv 0$ the system~\eqref{eq:LTBdyn1}--\eqref{eq:LTBdyn2} reduces to
system~(13a)--(13b) of \textsc{Paper~I}. The presence of the factor
$\sqrt{1+k(x)}$ in~\eqref{eq:LTBdyn1} is a genuine effect of the non-marginally
bound regime; as we shall see, in the model studied here it does not alter the qualitative behaviour of
the photon surface near the central singularity.

\subsection{Initial conditions and admissibility}

The exterior photon surface, in the Schwarzschild geometry, is located at $r=3M$.
Its extension into the LTB interior must therefore start from the boundary
$\Sigma=\{x=x_b\}$ at the comoving time $t_0$ defined implicitly by
\begin{equation}
  r(t_0,x_b) = 3\,m(x_b).
  \label{eq:t0def}
\end{equation}
The existence and uniqueness of $t_0$ are guaranteed by the strict monotonicity
of $r(\cdot,x_b)$ in $t$, which follows from $r_t<0$.

The second piece of initial data for system~\eqref{eq:LTBdyn1}--\eqref{eq:LTBdyn2}
is the value $y_0=y(t_0)$, which encodes the tilt of the photon surface at the
matching point. The requirement that the surface be non-spacelike at $t_0$
forces $|y_0|\le 1$. The boundary cases $y_0=\pm 1$ correspond, as recalled
above, to outgoing/ingoing null extensions, when the trajectory remains on the
null hypersurface $y\equiv\pm 1$ for all $t$.; the case $y_0=-1$ is physically
irrelevant since the ingoing radial null geodesic does not extend the exterior
photon surface continuously. The remaining question is whether some
intermediate value $y_0\in(-1,1)$ may yield a geometrically and physically
acceptable timelike extension.

The following lemma, which is the central analytical result of the present
paper, rules out all such intermediate values and forces $y_0=1$. If $y_0< 0$ the solution would not live in the interior spacetime for $t<t_0$, because by \eqref{eq:LTBdyn1} the function $x(t)$ would be decreasing. Then the only relevant case that must be discussed is when $y_0$ is positive. This discussion is provided by the following crucial result.

\begin{lemma}
\label{lem:main}
Let $(x(t),y(t))$ be a solution of system~\eqref{eq:LTBdyn1}--\eqref{eq:LTBdyn2}
with initial data $\bigl(x(t_0),y(t_0)\bigr)=(x_b,y_0)$, $y_0\in(0,1)$. Then
there exists $\tilde{t}<t_0$ such that $x(\tilde{t})=x_b$.
\end{lemma}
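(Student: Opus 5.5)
The plan is to integrate system~\eqref{eq:LTBdyn1}--\eqref{eq:LTBdyn2} \emph{backwards} in time from $t_0$ and to show that the auxiliary variable $y$ must change sign. Once $y<0$, equation~\eqref{eq:LTBdyn1} makes $x$ increase as $t$ decreases, so the curve is driven back to the boundary $x=x_b$. Since $\dot x(t_0)=\sqrt{1+k(x_b)}\,y_0/r_x>0$, we have $x(t)<x_b$ for $t$ slightly less than $t_0$, so the curve does enter the interior for earlier times.

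Throughout, we work in the compact region
\[
R=\{(t,x): t_1\le t\le t_0,\ x_b-\varepsilon\le x\le x_b\}.
\]
This region lies strictly before the singularity and apparent-horizon times, because $r(t_0,x_b)=3m(x_b)>2m(x_b)$ and $t_h$ is smooth. On $R$, the functions $r,r_x,r_t,r_{tx}$ are smooth, $r_x>0$, and the local existence and uniqueness theory for~\eqref{eq:LTBdyn1}--\eqref{eq:LTBdyn2} applies. Let $A=\sup_R|r_t/r-r_{tx}/r_x|$ and $B=\inf_R\sqrt{1+k}/r>0$.

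\textbf{Step 1: $y$ vanishes at some $t_1^*<t_0$.} Write
\[
\dot y=(1-y^2)\,F(t,x,y),\qquad F=\frac{\sqrt{1+k}}{r}+y\Bigl(\frac{r_t}{r}-\frac{r_{tx}}{r_x}\Bigr).
\]
The goal is to show $F\ge c>0$ on $R\times[0,y_0]$. Then $y$ is strictly increasing in $t$, and $y$ stays in $[0,y_0]\subset[0,1)$ as long as it is nonnegative. Hence $\dot y\ge c(1-y_0^2)$ there, so $y$ reaches $0$ within backward time at most $y_0/\bigl(c(1-y_0^2)\bigr)$. If needed, shrink $y_0$'s effect by first taking $R$ long enough in $t$ to contain this interval.

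Positivity of $F$ for small $y$ is immediate, since $F\ge B-yA$. For $y$ close to $1$, I plan to rewrite
\[
F=(1-y)\frac{\sqrt{1+k}}{r}+y\,\frac{\sqrt{1+k}+r_t}{r}-y\,\frac{r_{tx}}{r_x}.
\]
The middle term is the (positive) outgoing null expansion, which is positive because $r_t^2=k+2m/r<1+k$ when $r>2m$, i.e.\ in the untrapped region containing $R$. The remaining term $-r_{tx}/r_x$ has to be controlled via the implicit solution~\eqref{eq:solk}: differentiating in $x$ should express $r_{tx}/r_x$ through $m'/m$, $k'$ and $\Gamma'$.

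\textbf{This sign control of $F$ near $y=1$ is the step I expect to be the main obstacle.} If direct positivity fails, the fallback is a comparison argument: $y\equiv1$ is an invariant solution, so $y(t)<1$ for all $t$, and $1-y$ satisfies a linear differential inequality. That inequality keeps $y$ away from $1$ backwards, so one only needs $F>0$ on a compact $y$-range $[0,\bar y]$ with $\bar y<1$. This reduces to choosing the backward time window short enough that $B-\bar yA$, or the refined bound above, stays positive.

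\textbf{Step 2: $x$ returns to $x_b$.} At $t_1^*$ we have $\dot y(t_1^*)=\sqrt{1+k}/r>0$ because $y=0$ there, so $y<0$ on an interval $(t_2,t_1^*)$. By the same lower bound on $\dot y$, $y$ keeps decreasing backward and stays $\le -\delta<0$ on a further interval of definite length. On that interval $\dot x\le -\sqrt{1+k}\,\delta/\sup_R r_x<0$, so $x$ increases linearly as $t$ decreases. The displacement $x_b-x(t_1^*)$ is bounded by $\int_{t_1^*}^{t_0}\dot x\,dt$, which can be made small relative to the backward time available.

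Concretely, I will compare the forward increment of $x$ on $[t_1^*,t_0]$ with its backward decrement below $t_1^*$, using the bounds $A,B$ and the monotonicity of $y$. By continuity this yields $\tilde t<t_1^*$ with $x(\tilde t)=x_b$, while the trajectory never leaves $R$, which justifies the a priori bounds used above.
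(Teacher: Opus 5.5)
Your Step~1 has a genuine gap. The problem is the compact-region strategy itself, not only the sign of $F$ near $y=1$ that you flag, and it fails for $y_0$ close to $1$. Since $\dot y=(1-y^2)F$, on any region where $F\le\sup F$ the backward time needed to bring $y$ from $y_0$ down to $0$ satisfies
\[
t_0-t_1^*\;\ge\;\frac{1}{2\sup F}\,\ln\frac{1+y_0}{1-y_0},
\]
which blows up as $y_0\to1$. The portion of that time during which $y\ge 1/2$ also blows up, and there $\dot x\ge\sqrt{1+k}/(2\sup r_x)$. So the curve moves a macroscopic distance toward the centre, and no small $R=[t_1,t_0]\times[x_b-\varepsilon,x_b]$ can contain it. ``Taking $R$ long enough in $t$'' is then circular, because $c$, $A$ and $B$ are all defined on $R$.

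Your fallback does not rescue this. Any admissible $\bar y$ must satisfy $\bar y\ge y_0$, so requiring $F>0$ on $[0,\bar y]$ is the same unresolved near-null problem. Shortening the time window does not make $B-\bar yA$ positive either: as $R$ shrinks, $A$ and $B$ simply tend to their values at $(t_0,x_b)$, where nothing prevents $\bar yA\ge B$. A short window also contradicts the lower bound on $t_0-t_1^*$ above. Step~2 inherits the same defect: your comparison of the forward increment of $x$ with its backward decrement presupposes that the trajectory stays in $R$.

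The missing idea is a global contradiction argument that exploits the centre. Suppose $y$ never vanishes at a point with $x>0$. Then $x$ is monotone in the backward direction and must tend to $0$ at some $t_{\inf}$. Dividing \eqref{eq:LTBdyn2} by \eqref{eq:LTBdyn1} and separating variables gives
\[
\frac12\ln\frac{1-y(t)^2}{1-y_0^2}=\int_{x(t)}^{x_b}\Bigl[\frac{r_x}{r}\Bigl(1+\frac{y\,r_t}{\sqrt{1+k}}\Bigr)-\frac{y\,r_{tx}}{\sqrt{1+k}}\Bigr]dx .
\]
The left side is bounded above, and it tends to $-\infty$ along any sequence with $y\to1$. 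On the right, three facts hold outside the horizon, where $2m/r<1$:
\begin{itemize}
  \item the $r_{tx}$ term stays bounded after an integration by parts using $r_t=-\sqrt{k+2m/r}$;
  \item the factor $1-y\sqrt{k+2m/r}/\sqrt{1+k}$ is nonnegative;
  \item $\int r_x/r\,dx$ diverges logarithmically as the curve approaches $r=0$.
\end{itemize}
This contradiction yields $t_1$ with $y(t_1)=0$ and $x(t_1)>0$, with no sign information on $F$ near $y=1$. What forces $y$ to vanish is the $1/r$ blow-up at the regular centre, not local positivity near $(t_0,x_b)$.

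For Step~2 you already have the right ingredient. Since $\dot y=\sqrt{1+k}/r>0$ whenever $y=0$, $y$ stays negative for all earlier times and $x$ increases backward. This is exactly the paper's observation that $\ddot x(t_1)=(1+k)/(r\,r_x)>0$ makes $t_1$ a minimum of $x$. The paper then asserts the return to $x_b$ briefly, without a separate quantitative estimate.
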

\begin{proof} 
The proof follows the same strategy as Lemma~1 in
\textsc{Paper~I}~\cite{GiamboLucamarini2024}, adapted to the presence of the
factor $\sqrt{1+k(x)}$ in system~\eqref{eq:LTBdyn1}--\eqref{eq:LTBdyn2} which prevents to use an explicit expression for $r(t,x)$. We
organise the argument in two steps: first, we establish the existence of an
intermediate time $t_1<t_0$ at which $y(t_1)=0$ and $x(t_1)>0$; second, we use
this fact to show that the trajectory re-intersects the star boundary at an
earlier comoving time.

\medskip
\noindent\emph{Step 1.} As a first and crucial claim, let us establish that
\begin{equation}
  \exists\, t_1<t_0 \ :\ y(t_1)=0,\quad x(t_1)>0.
  \label{eq:P1}
\end{equation}

By contradiction, suppose that~\eqref{eq:P1} is false. Then either $y(t)$
remains positive for all $t<t_0$, or it vanishes only when $x(t)$ does as
well. In both cases there exists
$t_{\inf}\in[-\infty,t_0)$ such that
\begin{equation}
  \lim_{t\to t_{\inf}^{+}} x(t) = 0.
\end{equation}
Indeed, in the first case equation~\eqref{eq:LTBdyn1} would give $x(t)$
increasing for $t\in(t_{\inf},t_0)$, and hence the limit
$\lim_{t\to t_{\inf}^{+}} x(t)$ exists and must necessarily be zero (otherwise
$y$ would have to vanish at a positive value of $x$, contradicting the
assumption).

Using~\eqref{eq:LTBdyn1}--\eqref{eq:LTBdyn2} we obtain, by dividing the two
equations,
\begin{equation}
  \frac{dy}{dx} = \frac{\dot{y}}{\dot{x}}
  = \frac{1-y^2}{y}\,\frac{r_x}{\sqrt{1+k(x)}}\left[\frac{\sqrt{1+k(x)}}{r}
  + y\left(\frac{r_t}{r} - \frac{r_{tx}}{r_x}\right)\right],
  \label{eq:P2}
\end{equation}
whence, separating the variables and integrating between $y(t)$ and $y_0$ on
the left, and between $x(t)$ and $x_b$ on the right,
\begin{equation}
  \int_{y(t)}^{y_0}\frac{y}{1-y^2}\,dy
  = \int_{x(t)}^{x_b}\left[\frac{r_x}{r}\left(1+\frac{y(x)}{\sqrt{1+k(x)}}\,r_t\right)
  - \frac{y(x)}{\sqrt{1+k(x)}}\,r_{tx}\right]dx.
  \label{eq:LHSRHS}
\end{equation}
The integral on the left-hand side becomes
\begin{equation}
  \frac{1}{2}\int_{y(t)}^{y_0}\frac{2y}{1-y^2}\,dy
  = -\frac{1}{2}\ln\!\left(1-y^2\right)\bigg|_{y(t)}^{y_0}
  = \frac{1}{2}\ln\frac{1-y(t)^2}{1-y_0^2},
  \label{eq:P3}
\end{equation}
which is bounded from above, 
$\forall t\in(t_{\inf},t_0)$.

Let us now evaluate the integral on the right-hand side of~\eqref{eq:LHSRHS}.
For convenience we split the evaluation into two pieces. First, in view
of~\eqref{eq:LTBevol}, and recalling that $r_x/r>0$, we have
\begin{multline}
      \int_{x(t)}^{x_b}\frac{r_x}{r}\left(1+\frac{y(x)}{\sqrt{1+k(x)}}\,r_t\right)dx
  = \int_{x(t)}^{x_b}\frac{r_x}{r}\left(1-\frac{\sqrt{\tfrac{2m(x)}{r}+k(x)}}{\sqrt{1+k(x)}}y(x)\right)dx\\
  \ge C(t) \int_{x(t)}^{x_b}\frac{r_x}{r}\,dx
  = C(t)\,\ln\frac{3m(x_b)}{r(t,x(t))},
  \label{eq:P4}
\end{multline}
where 
$$
C(t)=\min_{t\le t_0} \left( 1-\frac{\sqrt{\tfrac{2m(x(t))}{r(t,x(t))}+k(x(t))}}{\sqrt{1+k(x(t))}}y(t)
\right).
$$
Observing that we are performing an evaluation
outside the horizon ($2m/r<1$), we have $C(t)\ge 0$. Then, since
$r(t,x(t))\to 0$ as $t\to t_{\inf}$, the quantity in~the right hand side of \eqref{eq:P4} tends to
$+\infty$ if $C(t)$ remains bounded away from zero, or remains positive in
case $\limsup_{t\to t_{\inf}}y(t)=1$, which could possibly imply $C(t)=0$. 

We therefore obtain a
contradiction if we show that the remaining part of the integral
in~\eqref{eq:LHSRHS} is bounded: indeed, if $\limsup_{t\to t_{\inf}}y(t)=1$, and possibly there exists a sequence $t_k\to t_{\inf}$ such that $C(t_k)\to 0$, the integral
in the left hand side of \eqref{eq:P4} would remain positive but the right hand side in~\eqref{eq:P3} would tend to
$-\infty$ on $t_k$; while if $y(t)$ remains bounded away from $1$, then $C(t)$ would remain bounded away from zero, the integral
in~the left hand side of \eqref{eq:P4} would tend to $+\infty$ (by comparison with the right hand side of \eqref{eq:P4}) and that in~\eqref{eq:P3} would remain
bounded.

So let us evaluate the remaining integral in~\eqref{eq:LHSRHS}. Thanks
to~\eqref{eq:LTBevol}, we have, integrating by parts,
\begin{equation}
\begin{split}
  -\int_{x(t)}^{x_b}\!\frac{y(x)}{\sqrt{1+k(x)}}\,r_{tx}\,dx
  &= \int_{x(t)}^{x_b}\!\frac{y(x)}{\sqrt{1+k(x)}}\!\left(\sqrt{\tfrac{2m(x)}{r}+k(x)}\right)_{\!x}\!dx \\[2pt]
  &= \frac{y(x)}{\sqrt{1+k(x)}}\,\sqrt{\tfrac{2m(x)}{r}+k(x)}\,\bigg|_{x(t)}^{x_b}
   - \int_{x(t)}^{x_b}\!\sqrt{\tfrac{2m(x)}{r}+k(x)}\,d\!\left(\frac{y(x)}{\sqrt{1+k(x)}}\right)\!.
\end{split}
  \label{eq:P6}
\end{equation}
The first quantity in the right-hand side of~\eqref{eq:P6} is obviously
bounded (again, recall $2m/r<1$), whereas the second integral transforms as
\begin{equation}\label{eq:P7}
  \int_{x(t)}^{x_b}\!\sqrt{\tfrac{2m(x)}{r}+k(x)}\,
  d\!\left(\frac{y}{\sqrt{1+k(x)}}\right)
  \le C_2\,\frac{y}{\sqrt{1+k(x)}}\bigg|_{t}^{t=t_0},
\end{equation}
with $C_2>0$ existing, again, since the evaluation is performed outside the horizon. Then the integral in \eqref{eq:P7} is bounded as well and, as observed before, we reach a contradiction,
and therefore~\eqref{eq:P1} must hold.

\medskip
\noindent\emph{Step 2.} From now on we argue similarly as in Lemma~1 of
\textsc{Paper~I}: $x(t)$ actually satisfies (see \cite[(12)]{GiamboLucamarini2024}) the second order ODE
\begin{equation}
\begin{split}
  \ddot{x}(t) &= \frac{k(x)+1}{r\,r_x}
              + \left(\frac{r_t}{r} - \frac{2r_{tx}}{r_x}\right)\dot{x}\\
             &\quad - \left(\frac{r_x}{r} + \frac{r_{xx}}{r_x} - \frac{k'(x)}{2\bigl(k(x)+1\bigr)}\right)\dot{x}^2
              + \frac{r_x\bigl(r\,r_{tx} - r_x\,r_t\bigr)}{r\bigl(k(x)+1\bigr)}\,\dot{x}^3,
\end{split}
\label{eq:LTBODE}
\end{equation}
from which it is immediate to conclude, since $\dot x(t_1)=0$, that
$t_1$ must be a minimum for $x(t)$, and therefore for all
$t<t_1$, $y(t)$ remains negative and $x(t)$ increases, eventually reaching
again $x_b$ for some $\tilde{t}<t_1$. This concludes the proof.
\end{proof}

\subsection{The null extension theorem}

Lemma~\ref{lem:main} has the following immediate consequence. Suppose
$(x(t),y(t))$ is an extension of the exterior photon surface with $|y_0|<1$.
By Lemma~\ref{lem:main}, the trajectory re-intersects $\Sigma=\{x=x_b\}$ at
some earlier comoving time $\tilde{t}<t_0$. At that earlier time, however, we
have
\begin{equation}
  r(\tilde{t},x_b) > r(t_0,x_b) = 3M,
\end{equation}
because the collapse is monotonically ingoing ($r_t<0$). Hence the extension
does not match the exterior photon sphere $r=3M$ at $\tilde{t}$: the matching
would require $r(\tilde{t},x_b)=3M$, which contradicts the previous inequality.
We conclude that extensions with $|y_0|<1$ are not physically acceptable.

The case $y_0=-1$ corresponds to the ingoing radial null geodesic, which is
also incompatible with the matching to the exterior photon sphere
(the ingoing null geodesic moves towards smaller $r$, whereas the photon
sphere $r=3M$ is a static hypersurface in the Schwarzschild exterior, matched
at $t_0$ from outside). Therefore the only physically acceptable extension
corresponds to $y_0=+1$, i.e.\ the outgoing null hypersurface.

\begin{theorem}
\label{thm:nullextension}
The general spherical dust collapse model~\eqref{eq:LTBmetric}--\eqref{eq:LTBevol}
admits an extension of the exterior photon surface $r=3M$ into the interior of
the cloud. This is the only physically acceptable extension, and it is given by
the null hypersurface
\begin{equation}
  \mathcal{S} = \bigl\{(t,x(t),\theta,\varphi):\ t<t_0,\ (\theta,\varphi)\in S^2\bigr\},
\end{equation}
where, for each fixed $(\theta,\varphi)$, the curve $(x(t),1)$ solves
system~\eqref{eq:LTBdyn1}--\eqref{eq:LTBdyn2} with initial data
$x(t_0)=x_b$ and $y(t_0)=1$.
\end{theorem}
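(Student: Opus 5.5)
The proof has two parts: existence of the null extension, and uniqueness among physically acceptable extensions.

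\emph{Existence.} I will show that the hypersurface described in the statement is well defined and is a photon surface. Since $y=\pm1$ are equilibria of \eqref{eq:LTBdyn2} (the factor $1-y^2$ vanishes there), setting $y\equiv 1$ reduces the system \eqref{eq:LTBdyn1}--\eqref{eq:LTBdyn2} to the single scalar equation
\[
\dot x(t)=\frac{\sqrt{1+k(x)}}{r_x(t,x)}
\]
with $x(t_0)=x_b$. This is precisely the inverse of the outgoing radial null geodesic equation \eqref{eq:nullgeodesic}. The right-hand side is smooth and positive wherever the cloud is regular: $r_x>0$ by the regularity assumptions and $k>-1$. Local existence and uniqueness of a solution backward in time from $t_0$ therefore follow from Cauchy--Lipschitz. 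I extend this solution to its maximal interval $t<t_0$, on which $x(t)$ is strictly increasing, and let the orbit under $SO(3)$ generate $\mathcal S$. Because the curve is a radial null geodesic, $\mathcal S$ is a null hypersurface generated by outgoing radial null geodesics, so the photon-surface condition of \textsc{Paper~I} (Proposition~2) holds trivially. At $t_0$ the curve sits at $x_b$ with $r=3M$, so it glues continuously to the exterior sphere. I do not expect this step to be an obstacle.

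\emph{Uniqueness.} I will classify all extensions by the initial tilt $y_0=y(t_0)\in[-1,1]$.
\begin{itemize}
\item If $y_0\in(-1,0)$ or $y_0=0$, equation \eqref{eq:LTBdyn1} makes $x$ non-increasing backward in time. Backward in time the curve then leaves $\{x\le x_b\}$, so it does not lie in the interior. For $y_0=0$ I use \eqref{eq:LTBODE}: since $\ddot x>0$ when $\dot x=0$, the point $t_0$ is a strict minimum of $x$, so $x(t)>x_b$ for $t$ near $t_0$.
\item If $y_0\in(0,1)$, Lemma~\ref{lem:main} gives $\tilde t<t_0$ with $x(\tilde t)=x_b$. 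Since $r_t<0$, we have $r(\tilde t,x_b)>3M$, so the surface meets $\Sigma$ at a point that is not on the exterior photon sphere. The extension therefore fails to match $r=3M$ consistently, as argued just before the theorem.
\item If $y_0=-1$, the surface is an ingoing null hypersurface, which moves to decreasing $x$ forward in time. It cannot continue the static exterior photon sphere from inside, so it is discarded.
\end{itemize}
The only remaining value is $y_0=1$, and by uniqueness of solutions at the equilibrium, $y(t)\equiv 1$.

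\emph{Main obstacle.} The delicate point is making the notion of ``physically acceptable'' precise enough that the intermediate case is genuinely excluded. For this I would lean entirely on Lemma~\ref{lem:main}, so the real work has already been done in Step~1 of that lemma. Within the present proof, the step needing most care is the borderline $y_0=0$, which the lemma does not cover. I would handle it through the positivity of the first term $(k+1)/(r r_x)$ in \eqref{eq:LTBODE} at $\dot x=0$.
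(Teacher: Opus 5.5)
Your proposal is correct and follows the paper's own argument. Existence comes from the equilibrium $y\equiv 1$, i.e.\ the outgoing radial null geodesic \eqref{eq:nullgeok}. Uniqueness comes from discarding $y_0<0$ (the curve would have $x>x_b$ for $t<t_0$), $y_0\in(0,1)$ (via Lemma~\ref{lem:main} together with $r_t<0$), and $y_0=-1$ (the ingoing null case). Your explicit handling of the borderline value $y_0=0$, using the positive term $(k+1)/(r\,r_x)$ in \eqref{eq:LTBODE}, is a small but useful addition: the paper only dismisses $y_0<0$ before the lemma and treats $y_0\in(0,1)$ in it, so that case is left implicit there.
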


\begin{remark}
Setting $y\equiv 1$ in~\eqref{eq:LTBdyn1}, the photon surface is generated by
the outgoing radial null geodesic equation
\begin{equation}
  \frac{dt}{dx} = \frac{r_x(t(x),x)}{\sqrt{1+k(x)}}, \qquad t(x_b)=t_0,
  \label{eq:nullgeok}
\end{equation}
which extends equation~(18) of \textsc{Paper~I} to the case $k(x)\ne 0$. In the
sequel, all the relevant information on the photon surface will be obtained by
analysing this ODE.
\end{remark}

\section{Photon surface analysis: black hole vs.\ naked singularity}
\label{sec:analysis}

By Theorem~\ref{thm:nullextension}, the qualitative behaviour of the photon
surface is entirely encoded in the outgoing radial null geodesic ODE
\eqref{eq:nullgeok}. The analysis splits naturally according to the causal
nature of the central singularity, as classified in
Section~\ref{subsec:causal}. In particular, recall that the classification recalled there makes use of the two parameters $n$ and $a$ defined in \eqref{eq:parameters}.

\begin{figure}
    \centering
    \includegraphics[width=0.8\linewidth]{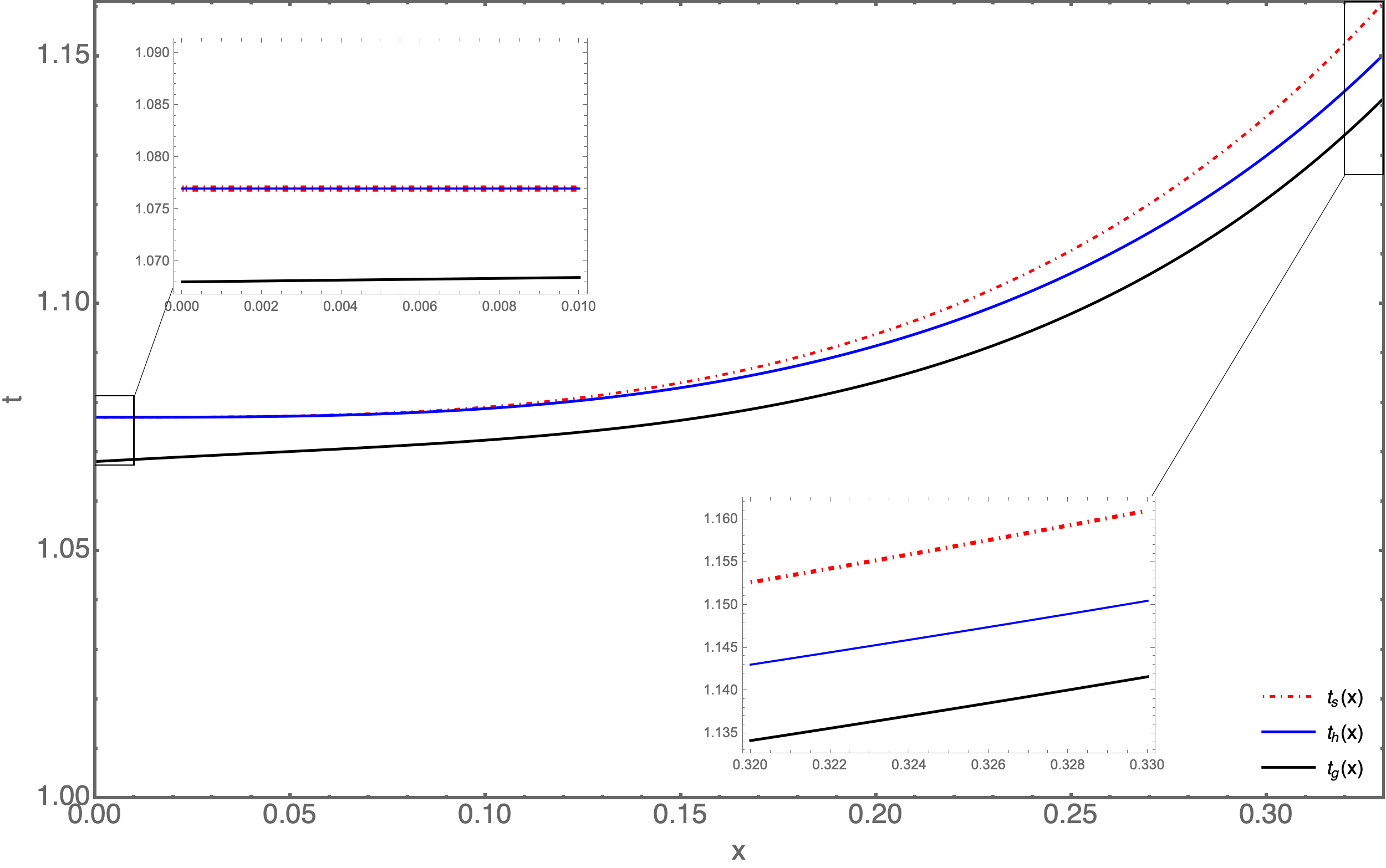}
    \caption{A photon surface (black curve), in the case of a covered central singularity, extends back to the regular centre $t<t_s(0)$. With reference to \eqref{eq:massprofile}--\eqref{eq:kprofile}, here we have $m_0=2/9,\,m_3=-1/10,\,k_0=-1/10,\,k_3=-2$. This results in the critical case where the parameters \eqref{eq:parameters} introduced in Section \ref{subsec:causal} are $n=3$ and $a\le a_\text{c}$.}
    \label{fig:BH}
\end{figure}

\subsection{The covered (black-hole) case}

Assume that the central singularity is covered, e.g.\ $n\ge 4$
 or $n=3$ with $a$ sufficiently small. Under these
conditions, the results of~\cite{Giambo2003} imply that
every outgoing radial null geodesic starting in the interior of the cloud with
$t<t_h$ extends backwards to the regular centre $x=0$ \emph{without} reaching
the central singularity.

The same comparison argument applies, verbatim, to the solution $t(x)$ of
\eqref{eq:nullgeok}. Since by construction $t(x_b)=t_0<t_h(x_b)$, and since
$t_h$ is a subsolution of~\eqref{eq:nullgeodesic}, we obtain
\begin{equation}
  t(x) < t_h(x) \qquad \text{for all } x\in(0,x_b],
\end{equation}
so that the photon surface stays everywhere below the apparent horizon. The
solution $t(x)$ extends backwards from $x_b$ to $x=0$, with
$\lim_{x\to 0^+}t(x)=t(0)<t_s(0)=1$. Therefore, in the black-hole case,
\textbf{the photon surface terminates at the regular centre} $x=0$, consistently
with the absence of any photon escaping the central singularity.

\medskip

\subsection{The naked singularity case}\label{sec:NS}

Assume now that the central singularity is naked, e.g.\ $n\in\{1,2\}$, or
$n=3$ with $a$ large in~\eqref{eq:parameters}. In principle, following the strategy of
\textsc{Paper~I}, this would mean that  a supersolution of~\eqref{eq:nullgeok} of the
form
\begin{equation}
  t_\xi(x) = t_s(0) + \xi\,x^n,
\end{equation}
exists, with $\xi$ 
chosen so that $t_\xi(x)$ lies
strictly below the apparent horizon $t_h(x)$ in a right neighborhood of
$x=0$, and at the same time
\begin{equation}
  t_\xi'(x) \ge \frac{r_x(t_\xi(x),x)}{\sqrt{1+k(x)}}, \qquad
  0<x\ll 1.
  \label{eq:supersol}
\end{equation}

The verification of~\eqref{eq:supersol} should rely on the Taylor expansions of $r$
and $r_x$ derived from the implicit form
\eqref{eq:solk}. 

\begin{figure}
    \centering
    \includegraphics[width=0.8\linewidth]{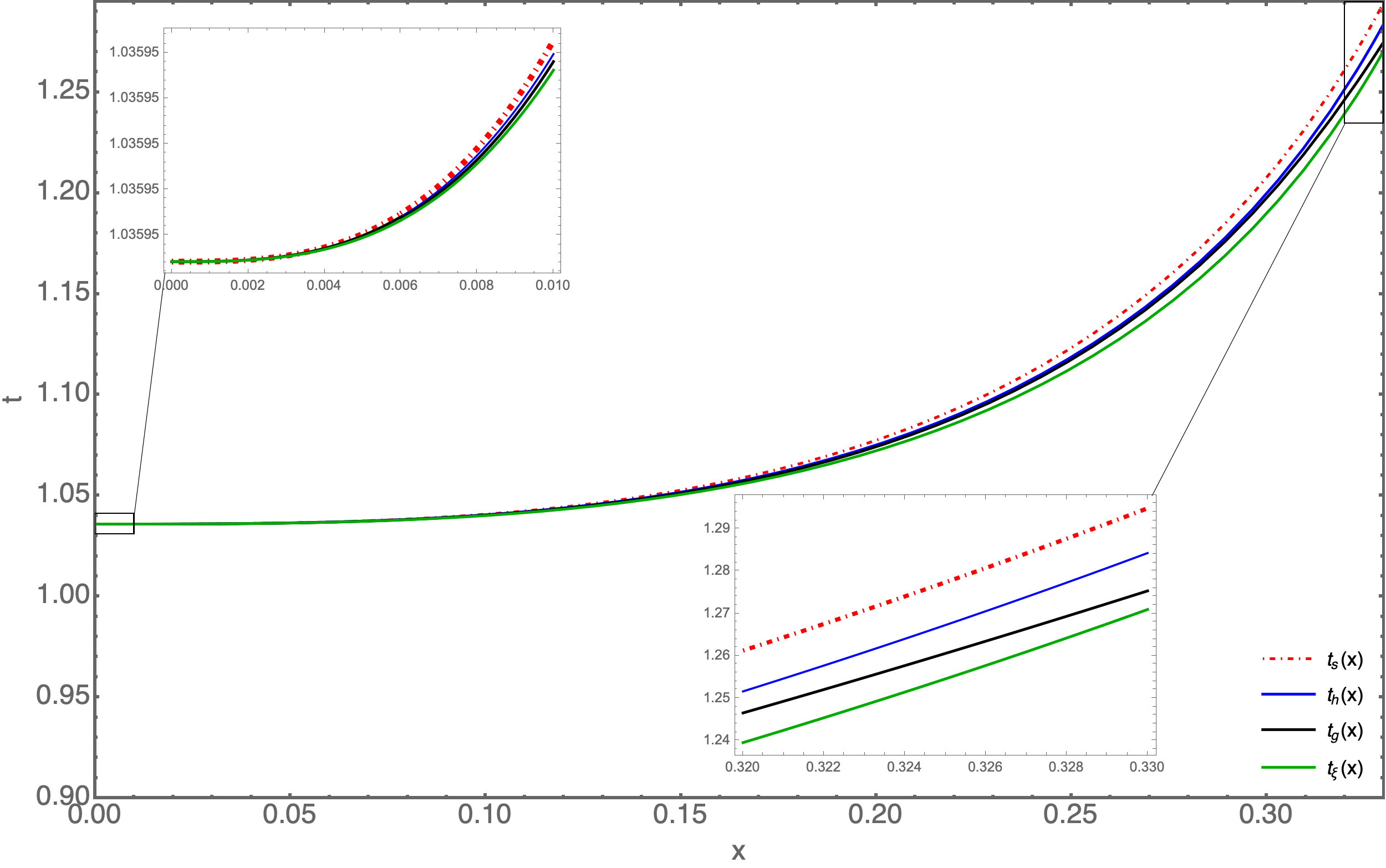}
    \caption{A photon surface (black curve) for a naked central singularity, bound ($k(x)<0$) case, extends back to the singular centre $t=t_s(0)$ because it is constrained between the apparent horizon $t_h(x)$ and $t_\xi(x)=t(\xi x^3,x)$ with a suitable choice of $\xi$, see the discussion in Section \ref{sec:NS} (here $\xi=3/4$). With reference to \eqref{eq:massprofile}--\eqref{eq:kprofile}, here we have $m_0=2/9,\,m_3=-1/10,\,k_0=-1/20,\,k_3=-6$. This results in the critical case where the parameters \eqref{eq:parameters} introduced in Section \ref{subsec:causal} are $n=3$ and $a> a_\text{c}$.}
    \label{fig:NS}
\end{figure}

However, in these cases, casting the problem into the so-called \textit{area-radius coordinates} $(r,x)$ proves to be a more suitable approach for the calculations. For instance, see \cite{Giambo:2002xc,Giambo2003}, where the null radial geodesic equation is actually given in the form $\frac{\mathrm dr}{\mathrm dx}=F(r(x),r)$. In this scenario, due to the collapsing nature of the solution, the “coordinate” $r$ behaves like a \textit{reversed} time. Consequently, the apparent horizon emerges as a \textit{supersolution} of the null geodesic equation when expressed within this framework. Furthermore, given the chosen mass profile \eqref{eq:massprofile}, the apparent horizon $r_h(x)$ goes like $2m_0 x^3$. Consequently, in this case, the singularity is naked if there exists a curve $r_\xi = \xi x^3$, with $\xi > 2m_0$, that serves as a \textit{subsolution} to the null geodesic equation.

Combining the subsolution $r_\xi$ with the supersolution $r_h$, one obtains
the two-sided estimate
\begin{equation}
  r_h(x)=2m(x) \le r(x) \le r_\xi(x) \qquad \text{for } 0<x<x_b,
\end{equation}
so that the solution $r(x)$ of the null geodesic equation is squeezed between $r_h$
(from below) and $r_\xi$ (from above) in a right neighborhood of $x=0$. By
choosing $x_b$ small enough that $t_h(x_b)<t_0<t(r_\xi(x_b),x_b)$ (which, by the
results of \textsc{Paper~I}, can always be arranged in the regime under
consideration),
once we re-read the situation back in comoving coordinates, we have that the null radial geodesic $t(x)$ extends all the way to $x=0$, with
\begin{equation}
  \lim_{x\to 0^+} t(x)= t_s(0).
\end{equation}
Hence, \textbf{the photon surface reaches the central naked singularity}.

Furthermore, by the very same comparison argument, the outgoing radial null
geodesics $t_g(x)$ with initial data $t_g(x_b)\in(t_\xi(x_b),t_0)$ also extend
to $x=0$ with limit $t_s(0)$. These null geodesics lie \emph{below} the
photon surface, i.e.\ $t_g(x)<t(x)$ for all $x\in(0,x_b]$; physically, they
correspond to photons that emanate from the central naked singularity and
escape to the exterior, lying outside the photon surface. Therefore, in the
naked singularity case, \textbf{the photon surface is insufficient to cover
the central singularity}, and an infinite family of escaping null geodesics
exists.

\medskip

\subsection{Main theorem}

The discussion above is summarized by the following result, which extends
Theorem~4 of \textsc{Paper~I}~\cite{GiamboLucamarini2024} to the general
non-marginally bound case.

\begin{theorem}
\label{thm:main}
Consider the general LTB dust collapse model~\eqref{eq:LTBmetric}--\eqref{eq:LTBevol}
with mass and energy profiles satisfying~\eqref{eq:massprofile}--\eqref{eq:kprofile}. Let $t(x)$ denote the photon surface established in
Theorem~\ref{thm:nullextension}. Then:
\begin{itemize}
  \item[(a)] If the central singularity is covered, the photon surface extends
  to the regular centre, with $\lim_{x\to 0^+}t(x)<t_s(0)$, and no outgoing
  null geodesic emanating from the central singularity exists.
  \item[(b)] If the central singularity is naked, the photon surface reaches
  the singularity, obtaining $\lim_{x\to 0^+}t(x)=t_s(0)$, and there exists an
  infinite family of outgoing radial null geodesics $t_g(x)$ emanating from
  the central naked singularity with $t_g(x)<t(x)$ for all $x\in(0,x_b]$.
\end{itemize}
In particular, the photon surface reaches the central singularity if and only
if the singularity is naked.
\end{theorem}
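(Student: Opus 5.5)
The plan is to reduce everything to the scalar ODE \eqref{eq:nullgeok}, using Theorem~\ref{thm:nullextension}: the photon surface is the graph of the backward solution $t(x)$ of \eqref{eq:nullgeok} with $t(x_b)=t_0$. The two parts then follow from comparison arguments against the barriers of Section~\ref{subsec:causal}, applied separately in the two regimes of the classification \eqref{eq:parameters}. The final ``if and only if'' is immediate once (a) and (b) are proved, since the classification is exhaustive.

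\textbf{Part (a).} First I check that $t_0<t_h(x_b)$. This holds because $r(t_0,x_b)=3M>2M=r(t_h(x_b),x_b)$ and $r_t<0$. Since $t_h$ is a subsolution of \eqref{eq:nullgeodesic}, the standard backward comparison lemma gives $t(x)<t_h(x)$ for all $x\in(0,x_b]$. To prove the lemma, argue by contradiction at the largest $\bar x<x_b$ where $t(\bar x)=t_h(\bar x)$. There $t'(\bar x)=r_x/\sqrt{1+k}\ge t_h'(\bar x)$, which is incompatible with $t<t_h$ just to the right of $\bar x$. If $t_h'$ equals the right-hand side there, one perturbs $t_0$ and passes to the limit.

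Next, the solution stays in the regular region $\{t<t_h\}$, where $r_x>0$ is bounded on compact sets. Hence it cannot blow up and extends down to $x\to0^+$, and the limit exists by monotonicity, since $t'>0$. It remains to show $\lim t(x)<t_s(0)$. This is exactly the statement of \cite{Giambo2003} that, in the covered case, no outgoing radial null geodesic reaches the central singularity. If the limit were $t_s(0)$, $t(x)$ would be such a geodesic. The same result also yields the nonexistence of escaping geodesics claimed in (a).

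\textbf{Part (b).} I work in area-radius coordinates $(r,x)$, where the null equation reads $dr/dx=F(x,r)$. In these coordinates the apparent horizon $r_h=2m(x)$ is a supersolution, and in the naked regime \cite{Giambo:2002xc,Giambo2003} provide a subsolution $r_\xi=\xi x^n$ (with $\xi>2m_0$ when $n=3$) below which solutions are trapped toward the singular centre. I choose $x_b$ small enough that $r_h(x_b)<3M<r_\xi(x_b)$, using $r(t_0,x_b)=3m(x_b)$ and the fact that $r_\xi/m\to\infty$ or $\xi/m_0>2$ as appropriate. Two-sided comparison then squeezes the photon surface, $2m(x)\le r(x)\le \xi x^n$, so $r(x)\to0$ as $x\to0^+$ while staying outside the trapped region. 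Translating back to comparison time via \eqref{eq:solk}, this forces $t(x)\to t_s(0)$.

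The infinite family of escaping geodesics comes from the same squeeze. Any $r_g(x_b)\in(r(x_b),r_\xi(x_b))$, i.e.\ $t_g(x_b)\in(t_\xi(x_b),t_0)$, gives a geodesic trapped between the same barriers, hence with limit $t_s(0)$. Uniqueness for \eqref{eq:nullgeok} prevents crossing, so $t_g<t$ on $(0,x_b]$.

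\textbf{Main obstacle.} The delicate step is (b): verifying the subsolution inequality for $r_\xi$ near $x=0$ with $k\ne0$, which needs a Taylor expansion of $r_x$ from the implicit solution \eqref{eq:solk} involving $\Gamma$ and $\Gamma'$. I would not redo this computation. Instead I would invoke the general theorem of \cite{Giambo2003} for metrics of the form \eqref{eq:generalmetric}, where the relevant function is $2m+rk$. That theorem yields exactly such barriers when $n\in\{1,2\}$ or $n=3$, $a>a_c$. A secondary point to check is that the smallness of $x_b$ required for the ordering $r_h(x_b)<3M<r_\xi(x_b)$ is compatible with the matching; failing that, I would continue the geodesic backward from $x_b$ until it enters the small-$x$ region where the barriers apply.
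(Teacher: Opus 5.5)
Your architecture is the paper's. In (a) you use the apparent horizon as a backward barrier and then the covered-case result of \cite{Giambo2003}. In (b) you squeeze in area-radius coordinates between $r_h=2m$ and an upper barrier $r_\xi$, with $x_b$ small, and obtain the escaping family from data $t_g(x_b)\in(t_\xi(x_b),t_0)$.

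The gap is in your choice of upper barrier and in what you infer from it. The paper uses $r_\xi=\xi x^3$ with $\xi>2m_0$. Your replacement $\xi x^n$ breaks down for $n\in\{1,2\}$.
\begin{itemize}
\item $r(x)\to0$ is not what forces $t(x)\to t_s(0)$. Every curve reaching $x=0$ has $r\to0$, including those ending at the regular centre. What \eqref{eq:solk} requires is that its right-hand side vanish, i.e.\ $r^3/m\to0$, i.e.\ $r=o(x)$.
\item For $n=1$ your barrier $\xi x$ does not give this. Along it, $\sqrt{r^3/2m}\to\sqrt{\xi^3/2m_0}>0$. So the squeeze $2m\le r\le\xi x$ only confines $t(x)$ to some interval $[t_s(0)-c,\,t_s(0)]$ with $c>0$, and is compatible with a regular-centre endpoint.
\item For $n=2$ the curve $\xi x^2$ is not a subsolution at all. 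Along it, $t_s(x)-t\sim x^{3/2}$, which dominates $t_s(x)-t_s(0)=O(x^2)$. Hence the corresponding comoving curve is decreasing near $x=0$, while $r_x>0$. It is therefore a comoving subsolution, i.e.\ an area-radius supersolution, and gives no upper bound on $r$.
\end{itemize}

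With $r_\xi=\xi x^3$ the argument goes through. Then $r^3/m=O(x^6)$, so \eqref{eq:solk} gives $t_s(x)-t(x)\to0$. For $n<3$, every $\xi>2m_0$ is a subsolution near $x=0$: both sides are dominated by multiples of $t_s'(x)\sim x^{n-1}$, and the comparison reduces to $1\ge\sqrt{2m_0/\xi}$. For $n=3$, admissible $\xi$ exist exactly when $a>a_{\mathrm{c}}$.

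Your ordering $r_h(x_b)<3M<r_\xi(x_b)$ has the correct orientation; the inequality displayed in the paper is reversed. However, it requires $\xi>3m_0$, not $\xi>2m_0$, because $r(t_0,x_b)=3m(x_b)$. For $n=3$ one must therefore check that the admissible window extends beyond $3m_0$. This is the point the paper delegates to \textsc{Paper~I}; its figures use $\xi=3/4>3m_0=2/3$.

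Finally, drop your fallback for non-small $x_b$. When the geodesic reaches the region where the barriers apply, nothing places it below $r_\xi$. It can pass below every geodesic issuing from the singularity and end at the regular centre. The smallness of $x_b$ is a genuine requirement, exactly as the paper imposes it.
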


\begin{remark}
\label{rem:unbound_examples}
The numerical examples presented in Figures~\ref{fig:BH} and~\ref{fig:NS} above
correspond to the \emph{bound} regime $k(x)<0$, i.e.\ to collapsing shells with
negative total mechanical energy. Entirely analogous examples can be constructed
in the \emph{unbound} regime $k(x)>0$, where each shell carries positive total
energy and the parametric solution is of hyperbolic type.
The qualitative behaviour of the photon surface is the same in both regimes: it
terminates at the regular centre in the covered case (Figure~\ref{fig:BH_unbound}),
and it reaches the central naked singularity in the naked case
(Figure~\ref{fig:NS_unbound}), in full agreement with Theorem~\ref{thm:main}. Also note that the examples in Figures \ref{fig:BH_unbound} and \ref{fig:NS_unbound} have been constructed by simply perturbing a model with initial homogeneous energy density, where $m(x)=m_0 x^3$. In the marginally bound case $k(x)\equiv 0$ this would simply correspond to the well known Oppenheimer-Snyder homogeneous dust collapse, but introducing a non zero (positive in this case, actually) $k(x)$ dramatically changes the qualitative picture of the collapsing star, falling within the scheme of Theorem \ref{thm:main}. 
\end{remark}

\begin{figure}[ht]
  \centering
  \includegraphics[width=0.8\textwidth]{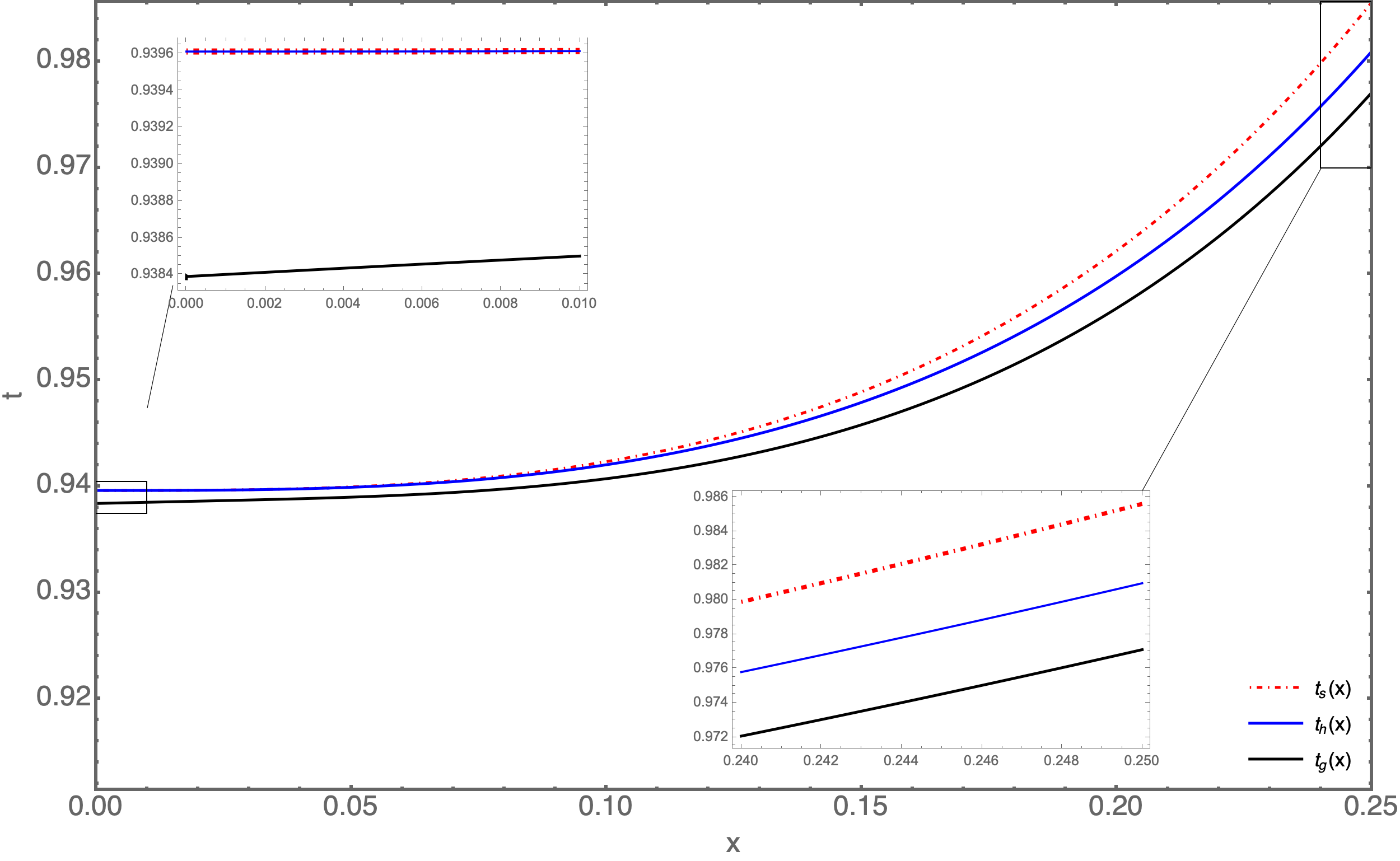}
  \caption{A photon surface (black curve), in the case of a covered central
    singularity, \emph{unbound} ($k(x)>0$) regime. The photon surface extends
    back to the regular centre $t\to t_s(0)$ without reaching the central
    singularity. With reference to~\eqref{eq:massprofile}--\eqref{eq:kprofile},
    here we take $m_0=\tfrac{2}{9}$,  $k_0=\tfrac{1}{10}$,
    $k_3=-5$, yielding parameters~\eqref{eq:parameters} $n=3$ and $a\le a_{\mathrm{c}}$. Here the boundary is set at $x_b=0.25$, so that $k(x)>0$ in $(0,x_b]$.}
  \label{fig:BH_unbound}
\end{figure}

\begin{figure}[ht]
  \centering
  \includegraphics[width=0.8\textwidth]{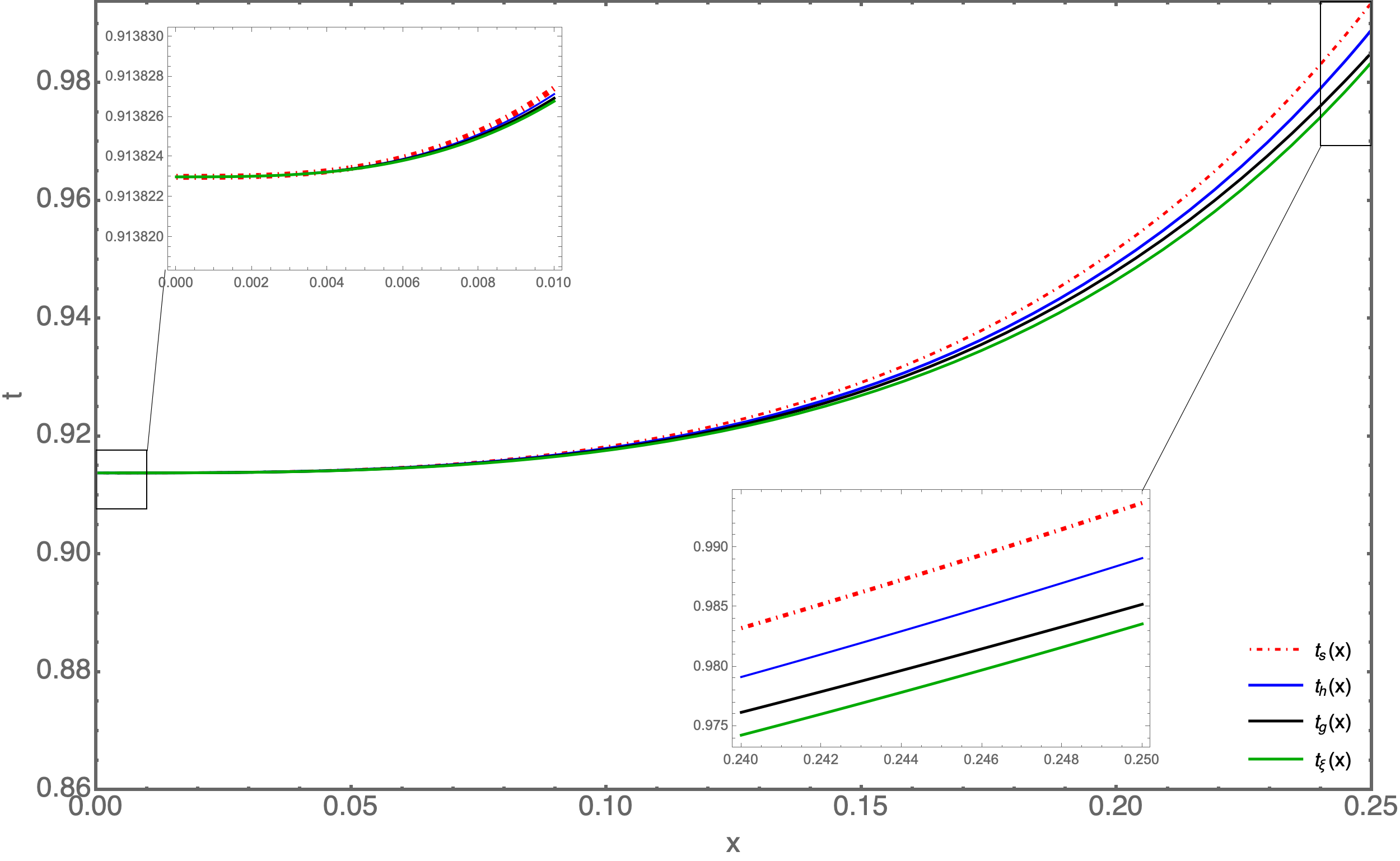}
  \caption{A photon surface (black curve) for a naked central singularity,
    \emph{unbound} ($k(x)>0$) regime. The photon surface is squeezed between
    the apparent horizon $t_h(x)$ and the curve $t_\xi(\xi x^3,x)$ for suitable $\xi$ (here $\xi=3/4$), 
    and reaches the singular centre $t\to t_s(0)$. With reference
    to~\eqref{eq:massprofile}--\eqref{eq:kprofile}, here we take
    $m_0=\tfrac{2}{9}$, $k_0=\tfrac{3}{20}$, $k_3=-9$,
    yielding parameters~\eqref{eq:parameters} $n=3$ and $a>a_{\mathrm{c}}$. The boundary is $x_b=0.25$.}
  \label{fig:NS_unbound}
\end{figure}

\section{Discussion and conclusions}
\label{sec:conclusions}

In this paper we have extended the analysis of photon surfaces in spherical
dust collapse, carried out in \textsc{Paper~I}~\cite{GiamboLucamarini2024} for
the marginally bound LTB model, to the general non-marginally bound case
$k(x)\ne 0$. The main conclusions can be summarised as follows.

\begin{enumerate}
  \item By recasting the geometric condition for a timelike hypersurface to be
  a photon surface in spherical symmetry as a non-autonomous first-order
  dynamical system, we have shown that the unique physically acceptable
  extension of the exterior photon sphere $r=3M$ into the collapsing dust
  cloud is a null hypersurface generated by outgoing radial null geodesics,
  irrespective of the initial non-zero binding energy of the shells
  (Theorem~\ref{thm:nullextension}).
  \item By combining this result with the parametric solutions of the LTB
  equation and the comparison properties of the apparent horizon, we have
  shown that the photon surface reaches the central singularity if and only
  if the singularity is naked (Theorem~\ref{thm:main}).
\end{enumerate}

The extension from $k\equiv 0$ to general $k(x)\ne 0$ is non-trivial at the
technical level for two main reasons. First, the LTB evolution
equation~\eqref{eq:LTBevol} no longer integrates in closed elementary form, so
that the explicit expression~\eqref{eq:explicitk0} used in \textsc{Paper~I} is
unavailable. We have shown that the
solution in the implicit form~\eqref{eq:solk}, together
with the asymptotic expansions near the centre, provide a sufficient
substitute for the comparison arguments needed to handle both the dynamical
system~\eqref{eq:LTBdyn1}--\eqref{eq:LTBdyn2} and the null geodesic
ODE~\eqref{eq:nullgeok}.

Second, the photon surface ODE acquires the factor $1/\sqrt{1+k(x)}$ relative
to its marginally bound counterpart. Since $k(0)=0$ and $k(x)=O(x^m)$ with
$m\ge 2$, these new terms may possibly 
alter the dominant terms in the super- and subsolution comparisons, combining with the mass function. But this combination can be treated via the introduction of area-radius coordinates \cite{Giambo:2002xc,Giambo2003}, promoting $r$ as a new coordinate in place of $t$. In this way the qualitative methods successfully employed in the marginally bound case can be used also for the general LTB collapse - and for a wider class of collapsing solutions, actually. 

From a broader perspective, Theorem~\ref{thm:main} confirms that the link
between the global structure of the photon surface and the local causal nature
of the central singularity is a \emph{robust} feature of the LTB model, not an
artefact of the marginally bound assumption. This structural dichotomy between
the black-hole and the naked-singularity case is consistent with the
expectation, supported by numerical and analytical
studies~\cite{Ortiz2015,Shaikh2019,VirbhadraEllis2002,KongEtAl2014}, that the
formation and the early-time features of the shadow of a collapsing object are
deeply tied to the causal structure of the entire underlying spacetime,
rather than to any specific choice of initial data or to the asymptotic regime
alone. In particular, the existence of an infinite family of outgoing null
geodesics escaping the photon surface in the naked case, with arbitrarily
large redshift~\cite{Dwivedi1998}, suggests that the early-time
shadow of a forming naked singularity should differ substantially from that of
a forming black hole, leaving a potentially distinguishing imprint for
next-generation observational instruments~\cite{EHT2019,EHT2024}.

An immediate natural
extension to be explored concerns the study of photon surfaces in dust collapse models with
non-vanishing tangential stresses~\cite{Magli1997} or, more generally, with
non-zero pressure, where the structure of the central singularity and of the
apparent horizon is significantly richer. Finally, the analysis of photon
surfaces in the presence of rotation, although technically considerably more
demanding due to the loss of spherical symmetry, would provide an important
step towards a fully realistic description of the early-time formation of
shadows in astrophysically relevant collapse scenarios.

\section*{Acknowledgements}

The authors acknowledge the support of INdAM. In particular, RG is partially
supported by ``INdAM--GNAMPA Project'' CUP E53C25002010001. 

\bibliographystyle{plainurl}
\bibliography{ps}        


\end{document}